\newtheorem{theorem}{Theorem}
\newtheorem{definition}{Definition}
\pgfplotsset{width=7cm}
  \providecommand\BibTeX{{%
    \normalfont B\kern-0.5em{\scshape i\kern-0.25em b}\kern-0.8em\TeX}}}
\begin{document}
\title{Inexactness and Correction of \\
 Floating-Point Reciprocal, Division and Square Root}

%%
%% The "author" command and its associated commands are used to define
%% the authors and their affiliations.
%% Of note is the shared affiliation of the first two authors, and the
%% "authornote" and "authornotemark" commands
%% used to denote shared contribution to the research.

\author{Lucas M. Dutton}
\email{dutton@mcmaster.ca}
\author{Christopher Kumar Anand}
\email{anandc@mcmaster.ca}
\affiliation{%
  \institution{McMaster University}
  \city{Hamilton}
  \state{Ontario}
  \country{Canada}
}
\author{Robert Enenkel}
\email{enenkel@ca.ibm.com}
\affiliation{%
  \institution{IBM Canada}
  \city{Markham}
  \state{Ontario}
  \country{Canada}
}
\author{Silvia Melitta M\"{u}ller}
\email{smm@de.ibm.com}
\affiliation{%
  \institution{IBM Germany Development}
  \city{Boeblingen}
  \state{Baden-W\"{u}rttemberg}
  \country{Germany}
}

%%
%% By default, the full list of authors will be used in the page
%% headers. Often, this list is too long, and will overlap
%% other information printed in the page headers. This command allows
%% the author to define a more concise list
%% of authors' names for this purpose.
\renewcommand{\shortauthors}{Dutton, Anand, Enenkel, M\"{u}ller}
\renewcommand{\shorttitle}{Correction of Division and Square Root}

% As a general rule, do not put math, special symbols or citations
% in the abstract
\begin{abstract}
Floating-point arithmetic performance determines the overall performance of important applications, from graphics to AI. Meeting the IEEE-754 specification for floating-point requires that final results of addition, subtraction, multiplication, division, and square root are correctly rounded based on the user-selected rounding mode. A frustrating fact for implementers is that naive rounding methods will not produce correctly rounded results even when intermediate results with greater accuracy and precision are available. In contrast, our novel algorithm can correct approximations of reciprocal, division and square root, even ones with slightly \emph{lower} than target precision.

In this paper, we present a family of algorithms that can both increase the accuracy (and potentially the precision) of an estimate and correctly round it according to all binary IEEE-754 rounding modes. We explain how it may be efficiently implemented in hardware, and for completeness, we present proofs that it is not necessary to include equality tests associated with round-to-nearest-even mode for reciprocal, division and square root functions, because it is impossible for input(s) in a given precision to have exact answers exactly midway between representable floating-point numbers in that precision. In fact, our simpler proofs are sometimes stronger.
\end{abstract}

% make the title area
\maketitle

%%
%% The code below is generated by the tool at http://dl.acm.org/ccs.cfm.
%% Please copy and paste the code instead of the example below.
%%

\begin{CCSXML}
  <ccs2012>
  <concept>
  <concept_id>10002950.10003714.10003715.10003726</concept_id>
  <concept_desc>Mathematics of computing~Arbitrary-precision arithmetic</concept_desc>
  <concept_significance>500</concept_significance>
  </concept>
  <concept>
  <concept_id>10010583.10010600.10010615.10010616</concept_id>
  <concept_desc>Hardware~Arithmetic and datapath circuits</concept_desc>
  <concept_significance>500</concept_significance>
  </concept>
  <concept>
  <concept_id>10003752.10010070</concept_id>
  <concept_desc>Theory of computation~Theory and algorithms for application domains</concept_desc>
  <concept_significance>300</concept_significance>
  </concept>
  </ccs2012>
\end{CCSXML}

\ccsdesc[500]{Mathematics of computing~Arbitrary-precision arithmetic}
\ccsdesc[500]{Hardware~Arithmetic and datapath circuits}
\ccsdesc[300]{Theory of computation~Theory and algorithms for application domains}

%%
%% Keywords. The author(s) should pick words that accurately describe
%% the work being presented. Separate the keywords with commas.
\keywords{special function, instruction set architecture, floating-point}

\section{Introduction}
Elementary mathematical functions are widely used in high-performance
applications, such as AI, databases and scientific computing.
The precision and accuracy requirements for these functions differ for each
application.
Hardware implementations should ideally support high levels of both performance and accuracy,
while staying within area budgets. 
The highest level of accuracy is correctly rounded to IEEE-754 specifications \cite{IEEEStandardFloatingPoint2019}.
Ideally, we could meet this requirement while simultaneously achieving very high throughput and low latency.
In this paper,
we develop a method that can help with this goal by replacing the final rounding of algebraic functions with a correction algorithm.
The ability to correct small errors eliminates the need to calculate values to a higher precision and subsequently round them,
something more expensive than one might imagine due to IEEE-754's round-to-nearest-even mode.

\medskip

Most algorithms for evaluating algebraic functions (i.e., roots of polynomial equations, which include square root and divide) can be separated into
three phases: computing an initial approximation from the input,
refining the approximation to a sufficient accuracy, then
rounding the refined output to the target precision.
The initial approximation lacks the required accuracy and may have lower precision. 
Refining the approximation increases
the accuracy of the result, usually to the point
where we have excess accuracy (and therefore must use excess precision, meaning more bits than in the output format).

The rounding step must round the result to the desired precision
while meeting IEEE-754 requirements for rounding of inexact results.
In particular, the most common rounding mode requires that 
exact results which are midway between representable floating-point numbers
be rounded to the one whose final bit is zero.
In this work, in addition to the correction algorithm, 
we provide results that are useful for accelerating
the rounding step for reciprocal, divide and square-root (sqrt) functions,
by allowing the algorithm designer to
safely disregard cases that we show cannot occur.

\medskip
Section \ref{sect:background} sets the stage with required definitions and notation.
 Section \ref{sect:theorems} introduces and explains
  the new theorems, with proofs, that the set of real numbers obtained by taking the reciprocal, square root, reciprocal square root or division of numbers representable in the target floating-point format does not contain numbers at the midpoint between successive floating-point numbers.
In Section \ref{sect:implementation}, we present a novel correction procedure for these functions, which is simplified as a result of the aforementioned theorems.  We also describe how to efficiently implement this algorithm using common hardware blocks already present in floating-point units.  
Finally,  in Section \ref{sect:priorwork}, 
we contrast our implementation and theory with previously known results.

\medskip
\section{Background\label{sect:background}}
We first introduce terminology used to describe floating-point numbers,
and the choices made in this paper.
In particular, 
we define units in the last place (ulps),
and rounding modes.
Readers new to floating-point arithmetic should consult  \cite{goldbergWhatEveryComputer1991}.

\subsection{Floating-Point Numbers}
A floating-point number is like a number in scientific notation,
and is defined by the following parameters:
\begin{enumerate}
\item the base of the representation, which is always $2$ in this paper, since we only consider binary representations.
\item $n$, the precision, i.e. the number of bits in the significand in the binary representation, equivalently, $n-1$ is the number of fraction bits.
\item $e_{min}$ and $e_{max}$, the minimum  and maximum allowable exponents.
\end{enumerate}

\noindent
For a floating-point number $x$, we can write it as
\begin{equation} \label{eq:fpformat}
x = \pm d_0.d_1d_2... d_{n-1} \times 2^e
\end{equation}
such that $e_{\min} \leq e \leq e_{max}$ and $d_{i} \in \{0, 1\}$.
In optimized binary representations, such as IEEE-754,  we store
\begin{enumerate}
\item A sign bit $s$, indicating the sign, positive (0) or negative (1).
\item (Unsigned) exponent bits representing the non-negative number $e +e_{\text{bias}}$,
where the bias value is specified by the format, and given in Table~\ref{tab:formats} below.  Note the 0 value is reserved for subnormal numbers and exceptions, which can
  be handled using known techniques, and so will not be discussed in this paper.
\item significand bits, sometimes called mantissa bits, $m_{\text{bits}}$ represented similarly
  to Equation~\ref{eq:fpformat},
  but there is an implicit leading one bit that is not stored.
  The stored bits are called the fraction bits.
  This implicit one is equivalent to always setting bit $d_0=1$ in
  Equation~\ref{eq:fpformat}. Thus, the precision of the floating
  point number is always $n$.
\end{enumerate}

\begin{figure}[ht]
  \centering
  \includegraphics[width=1.0\textwidth]{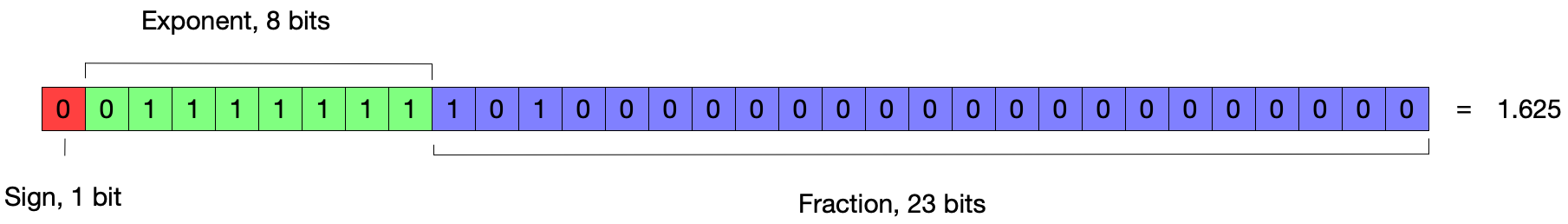}
  \caption{single-precision binary floating-point layout}
  \label{fig:spexample}
\end{figure}

There are several proposed formats of floating-point numbers. The most widely
used formats are contained in the IEEE-754 Standard for Floating-Point Arithmetic
\cite{IEEEStandardFloatingPoint2019}, especially 32-bit single-precision and
64-bit double-precision numbers. 
For normal numbers (i.e., when the exponential bits are not zero), we can calculate the represented value as
\begin{equation} \label{eq:fpbinary}
x = -1^{s_{\text{bit}}} \times 2^{e_{\text{bits}} - e_{\text{bias}}} \times \left( 1 + \sum_{i=1}^{n} m_{\text{bits}}[i] * 2^{-i}\right).
\end{equation}
Figure~\ref{fig:spexample}
illustrates an example of a single-precision number, whose value via
 \eqref{eq:fpbinary} is
\[
-1^{0} \times 2^{127 - 127} \times \left(1 + (2^{-1} + 2^{-3})\right) = 2^0 \times \left(1 + 0.5 + 0.125\right) = 1.625.
\]

Recently, AI researchers have introduced new and alternative 
lower-precision formats, allowing for faster but less accurate computations.
Table~\ref{tab:formats} summarizes some common formats.

\begin{table}[h]
  \hfil
  \caption{\label{tab:formats}
    Standard and some non-standard binary floating-point formats in use today.
  }
  \begin{tabular}{|p{3cm}|r|r|r|r|}
    \hline
    Format & Exponent & Fraction & Total bits & Bias \\
    \hline
    IEEE Double & 11 & 52 & 64 & 1023 \\
    \hline
    IEEE Single & 8 & 23 & 32 & 127 \\
    \hline
    IEEE Half & 5 & 10 & 16 & 7 \\
    \hline
    DLFloat \cite{agrawalDLFloat16bFloating2019a}
      & 6 & 9 & 16 & 32 \\
    \hline
    bfloat16 \cite{kalamkarStudyBFLOAT16Deep2019}
      & 8 & 7 & 16 & 127 \\
    \hline
  \end{tabular}\hfil
  \vspace{2mm}
\end{table}

\subsection{Unit in Last Place}
Intuitively, we can think of an ulp to be the place value of the lowest-order bit
of the significand. 
There are several formal definitions in the literature
such as \citet{harrisonMachineCheckedTheoryFloating1999b},
\citet{muller2005definition} and \citet{kahanLogarithmTooClever2004}.
We follow closely the definition given in
\citet{cornea-haseganCorrectnessProofsOutline1999a}.

\begin{definition}[Unit in Last Place]
  Let $x = d_0.d_1d_2... d_{n-1} \times 2^e$ be a floating-point number. The
  \texttt{ulp} of $x$ is

\[
ulp(x) = 0.\underbrace{0 \cdots 00}_\text{\makebox[0pt]{$n-2$ leading zeros}}1 \times 2^e
\]

\end{definition}

Note that in this definition, floating-point numbers have an ulp, but non-representable numbers do not.
This distinction is important for numbers just less than a power of two, 
for which the ulp values of the bracketing floating-point numbers change.
All of the numbers with the same exponent bits, and therefore the same ulp value are called a \emph{binade}.
In practice, hardware implementations calculate exponent and significand values independently
and calculate the significand values using fixed-point arithmetic.
Thus, the ulp value is the place value of a particular bit.

It is convenient to define errors in terms of ulps of the correctly rounded result. 
For example, if we allow for 
2 ulps of error, we have 5 possible values that the result can be rounded to,
in the set $\{-2, -1, 0, 1, 2\}$.
To be precise, we need to modify this definition for values near the binade boundary,
but usually, these values are automatically excluded by the properties of the algorithms.

\subsection{Rounding}

\begin{figure}[h!]
  \centering
  \includegraphics[width=0.5\textwidth]{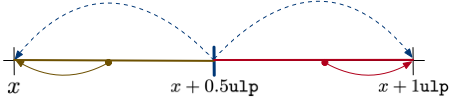}
  \caption{Cases for rounding to nearest even}
  \label{fig:ulpline}
\end{figure}

Rounding is important because no set of representable floating-point numbers in a given precision is closed under division, 
let alone general algebraic functions.  
For example, the number $1/10$ can be represented
exactly in decimal, but not in a binary format:
it will have an infinite sequence of bits. 
We must round this number to
a representable floating-point number, which depends on the rounding mode.

The default rounding mode used is round-to-nearest-even, and there are 3 cases
to deal with when rounding with this mode, as shown in Figure~\ref{fig:ulpline}:
\begin{itemize}
\item The number is exactly representable: No rounding is necessary.
\item The number is in the upper(red)/lower(yellow) half interval: Rounds to the nearest representable number.
\item At the midpoint: Round to the \textbf{nearest even} representable number,
  illustrated with the blue dotted lines.
\end{itemize}

The midpoint case makes round-to-nearest-even the hardest to treat in most implementations. 
The other rounding modes, which are round-to-$+\infty$, round-to-$-\infty$ and round toward zero, are less complicated:

\begin{itemize}
\item Round-to-$+\infty$: rounds up toward positive infinity, works as the \textit{ceiling} function
\item Round-to-$-\infty$: rounds down toward negative infinity, works as the \textit{floor} function
\item Round-to-zero: rounds toward zero, works as the \textit{truncate} function
\end{itemize}

\medskip
\section{Rounding Theorems\label{sect:theorems}}
The rounding modes explained in Section~\ref{sect:background} provide a
specification that designers must satisfy when implementing an algorithm involving
floating-point numbers. 
In particular, round-to-nearest-even requires an equality check which the other modes do not require. 
For elementary function implementations, which
typically rounds a refined, high-precision approximation down to its resulting precision,
this requirement risks increasing the execution time of the rounding procedure.

Our proofs in this section will demonstrate that
for the reciprocal, division and square root functions, 
the exact result will never land at a midpoint when the operand
and the result are in the same floating-point format. This
allows implementations to remove midpoint checks from the underlying
hardware, and simplifies the implementation.

The results we need are not new.  
Some go back to \citet{markstein2000ia},
and \citet{jeannerodMidpointsExactPoints2011}
have cataloged cases in which many common functions can produce results that are representable or hit the midpoint between representable numbers in a target precision,
not only for binary floating-point but for other small radices. 

We present our proofs for the results we need to make this paper complete,
to present simpler binary-only proofs,
and to point out that a stronger result is true: 
not only are midpoints never the exact reciprocal of a floating-point number---reciprocals of floating-point numbers are either representable in the same precision or they are not representable in any precision, i.e., their binary expansions are always infinite.

\medskip
We also point out that while we do need the theorems about midpoints,
we do not need theorems about exactly representable reciprocals, divisions or square roots.
Our implementation in Section~\ref{sect:implementation} does not
  require a test for equality because:
\begin{enumerate}
\item in round-to-even mode, we know that midpoints cannot occur as exact answers; and
\item for all modes, since we start with a result in the target precision, the correction is always an integral number of ulps, and its calculation will be the same, whether the exact result is representable or not.
\end{enumerate}

\subsection{Representable Numbers}
Informally, we defined representable numbers as numbers that have an exact
floating-point representation. 
Formally,
\begin{definition}
For a given precision $n$, a
real number $x\in(1,2)$ is \emph{representable} if and only if:

\begin{align}
\exists B,e \in \mathbb{Z}\quad:\quad & 2^{n-1} \le B < 2^{n}, \\
                            & e_{min}\le e \le e_{max},  \\
          \text{such that } \quad\quad                  & x = \frac{B}{2^{n-1}}\times 2^{e}.
\end{align}
\end{definition}

If we restrict our attention to numbers in the interval $(1,2)$, $e=0$ can be omitted.
In other words, representable numbers can be expressed as a fraction
with an $n$-bit binary positive integer whose leading bit is $1$ and which contains at least one additional $1$ in the numerator, and a power of two in the denominator.
In particular, $B$ cannot be a power of $2$, but it can be divisible by powers of $2$.
The largest such power occurs for $110...0=2^{n-1}+2^{n-2}$, which is divisible by $2^{n-2}$.

\subsection{Reciprocals cannot be midpoints}
Floating-point numbers are either powers of two, in which case the reciprocal is another power of two,
or their binary representations require an infinite series of fractional bits,
i.e., they are not representable in any finite precision, 
and therefore they cannot be branch points for any rounding mode,
including the problematic midpoints between representable floating-point numbers. 
 
\begin{theorem}
\label{thm:reciprocal}
Let $n>1$ be a floating-point precision and $A \in (2^{n-1}, 2^n)$ the numerator of the representable number $\frac{A}{2^{n-1}}$. 
There does not exist $m>1$ and $B \in (2^{m-1}, 2^m)$ such that $\frac{B}{2^{m}} = \frac{2^{n-1}}{A}$, i.e, the inverse of $\frac{A}{2^{n-1}}$.
\end{theorem}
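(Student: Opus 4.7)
The plan is to argue by contradiction via a short divisibility argument. I would assume the conclusion fails, i.e., that there exist integers $m > 1$ and $B \in (2^{m-1}, 2^m)$ with
\[
\frac{B}{2^{m}} = \frac{2^{n-1}}{A}.
\]
Cross-multiplying turns this into the integer equation $A \cdot B = 2^{n+m-1}$.

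The heart of the argument is then that the right-hand side is a pure power of two, so every positive-integer divisor of it must also be a power of two. In particular $A$ would have to be a power of two. But the hypothesis $A \in (2^{n-1}, 2^n)$ is strict on both sides, so $A$ lies strictly between two consecutive powers of two and therefore cannot itself be a power of two. This yields the desired contradiction, and in fact shows the stronger statement advertised in the text: no finite binary expansion (of any precision $m$) can represent $\tfrac{2^{n-1}}{A}$.

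I would take care of two small bookkeeping points to make the argument airtight. First, I would confirm that $A$ is assumed to be a positive integer (this is implicit from calling it the numerator in the definition of representable numbers in $(1,2)$). Second, I would note explicitly that because the exponent $n + m - 1$ in $2^{n+m-1} = AB$ is a nonnegative integer (using $n,m > 1$), the prime factorization argument applies directly without any concern about negative powers of two on the right-hand side.

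The main potential obstacle, and it is a mild one, is making clear exactly what representability at precision $m$ means in the statement so that the conclusion "no midpoint can arise" follows. I would therefore close with a one-line remark that midpoints between consecutive representable numbers at target precision $n$ in the binade $(\tfrac12, 1)$ are exactly the rationals $\tfrac{B}{2^{n}}$ with $B \in (2^{n-1}, 2^{n})$ odd (equivalently, representable at precision $n+1$), so the theorem already rules them out as reciprocals. Beyond that, the proof is essentially just the observation that $A$ is not a power of two and that dividing a power of two requires being a power of two.
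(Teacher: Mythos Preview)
Your proposal is correct and follows essentially the same route as the paper: assume a solution exists, cross-multiply to obtain $AB = 2^{n+m-1}$, and derive a contradiction from the fact that $A$ (lying strictly between consecutive powers of two) cannot be a power of two. Your version is in fact slightly more streamlined, since you only need $A$ to fail to be a power of two, whereas the paper invokes that both $A$ and $B$ are non-powers of two before reaching the contradiction; and your closing remark about midpoints being precision-$(n{+}1)$ numbers exactly mirrors the paper's corollary.
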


\begin{proof}
  We use proof by contradiction. Assume that such $m$ and $B$ exist.  It follows that
\begin{align*}
   \frac{B}{2^{m}} = \frac{2^{n-1}}{A} 
  &\iff AB = 2^{n+m-1}  \\
%  &\iff A = 2^a , B = 2^b, a>0, b>0, a+b=n+m-2>0 \\
\end{align*}
but, by the condition of the theorem, $A$ is not a power of $2$ 
and by the fact that the image under inverse of the interval $(1,2)$ is the interval $(0.5,1)$, $B$ is also not
a power of $2$ which is only possible if $n+m-1=0$, which is a contradiction.
\end{proof}

From the above theorem, we can deduce that the only floating-point numbers with representable inverses are powers of two.
Consider the input binade $[1,2)$ with exponent $0$.
The included point, $1$, is a fixed point of reciprocal,
and the interior points are mapped to the interval $(1/2,1)$.
It follows that the exponent is mapped as 
\begin{equation}
    e \mapsto \begin{cases}
        -e & \text{input is a power of two, or} \\
        -e - 1 & \text{otherwise.}
    \end{cases}
\end{equation}
It follows that calculating inverses can be performed in
fixed-point arithmetic for the interior of the binades,
which are exactly all non-power-of-two inputs and outputs.

This leads us to the corollary, which concerns
round-to-nearest-even cases. 
It states that the reciprocal of
a floating-point number cannot be at the midpoint between
two floating-point numbers.
\begin{corollary}[Reciprocals cannot be midpoints]
\label{cor:recipMidpoint}
Let $n>1$ be a floating-point precision and $A \in (2^{n-1}, 2^n)$. Then $\frac{2^{n-1}}{A} + 2^{-n-1}$ is not representable.
\end{corollary}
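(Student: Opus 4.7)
The plan is a short proof by contradiction that piggybacks on Theorem~\ref{thm:reciprocal}. First, I would assume $\frac{2^{n-1}}{A} + 2^{-n-1}$ is representable; any representable floating-point number is, up to the exponent, a dyadic rational $p/2^q$. Since $2^{-n-1}$ is trivially dyadic and dyadic rationals are closed under subtraction, it would follow that $\frac{2^{n-1}}{A}$ is also a dyadic rational.

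Second, I would rule this out by a parity/divisibility argument. Writing $\frac{2^{n-1}}{A} = P/2^q$ in lowest terms (so $P$ is odd and $q \ge 0$) and cross-multiplying yields $P \cdot A = 2^{n+q-1}$. Since the right-hand side is a power of two and $P$ is a positive odd integer, $P$ must equal $1$, forcing $A = 2^{n+q-1}$ to be a power of two. This contradicts the hypothesis $A \in (2^{n-1}, 2^n)$, an open interval that contains no power of two. Structurally this is just the divisibility step from the proof of Theorem~\ref{thm:reciprocal}, rewritten to allow any dyadic denominator rather than the specific $B/2^m$ with $B \in (2^{m-1}, 2^m)$ used there; one could alternatively just cite Theorem~\ref{thm:reciprocal} directly (in its natural ``not a dyadic rational'' reading), since the content is the same.

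I do not anticipate any real obstacle. The only minor subtlety is being precise about what \emph{representable} means in the corollary, because $\frac{2^{n-1}}{A} + 2^{-n-1}$ lies near $(1/2,1)$ rather than in the $(1,2)$ binade used by the formal definition in Section~\ref{sect:background}. I would resolve this either by multiplying through by $2$ to land in $(1,2)$ before appealing to the definition, or by working with the equivalent characterization ``finite binary expansion = dyadic rational,'' which is scale-invariant and meshes cleanly with the subtraction-by-$2^{-n-1}$ step above.
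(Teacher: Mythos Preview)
Your proposal is correct and matches the paper's approach. The paper's proof is precisely the two-sentence shortcut you anticipate at the end: a midpoint between precision-$n$ numbers is itself a precision-$(n{+}1)$ number, so Theorem~\ref{thm:reciprocal} (which rules out representability of $2^{n-1}/A$ in \emph{any} precision $m>1$) applies immediately; your dyadic-rational route reaches the same contradiction by redoing that theorem's divisibility step.
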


\begin{proof}
A midpoint between precision-$n$ floating-point numbers is a precision-$n+1$ floating-point number.
So this would contradict the theorem.
\end{proof}
This makes the equality test in round-to-nearest-even implementations unnecessary.

\begin{comment}
\begin{corollary}[Reciprocals cannot be representable.]
\label{cor:recipExact}
Let $n>1$ be a floating-point precision and $A \in (2^{n-1}, 2^n)$. Then $\frac{2^{n-1}}{A}$ is not representable.
\end{corollary}
This is a direct result of the theorem,
and implies that directed rounding modes do not have to consider the possibility that an approximation such as
$0.1111111_{\text{base }2}$ is the start of an infinite series of $1$ bits which should be rounded to $1.0_{\text{base }2}$.
\end{comment}

\subsection{Divisions cannot be midpoints}
A similar result follows for the midpoint for division.
The quotient of two floating point numbers cannot be a midpoint in any equal or higher precision.
So, in particular, dividing two single-precision numbers cannot result in a midpoint in single or double-precision.

\begin{theorem}
  Let $n > 1$ be a floating-point precision, 
  $$
  \frac{A}{2^{n-1}}\times 2^p, \quad 
  \frac{B}{2^{n-1}}\times 2^q
  $$
  be floating point numbers, where integers $A,B \in [2^{n-1},
  2^n)$. 
  There does not exist $m \ge n$ and floating-point number
  $$
  \frac{C}{2^{m-1}}\times 2^r,
  $$
  where integer
  $C \in
  (2^{m-1}, 2^{m})$,
  such that 
  $$\frac{A/2^{n-1}\times 2^p}{B/2^{n-1} \times 2^q} = 
    \left( \frac{C}{2^{m-1}} + \frac{1}{2^{m}} \right) \times 2^r.$$
\end{theorem}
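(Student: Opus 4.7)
The proof will mirror the structure of Theorem~\ref{thm:reciprocal}, but with an extra odd factor $2C+1$ to track. First I would cancel the common $2^{n-1}$ appearing in the significand denominators on both sides and collect all the signed powers of $2$, reducing the hypothesized equality to
\[
\frac{A}{B} \;=\; (2C+1)\cdot 2^{\,r-p+q-m}.
\]
This isolates all of the relevant arithmetic structure in the two integers $A,B\in[2^{n-1},2^n)$ and the odd integer $2C+1$.

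Next I would pass to $2$-adic valuations. Because $2C+1$ is odd, $\nu_2(2C+1)=0$, so matching $\nu_2$ on both sides of the rational identity pins down the exponent relation $\nu_2(A)-\nu_2(B)=r-p+q-m$. Writing $A=2^{\nu_2(A)}A_o$ and $B=2^{\nu_2(B)}B_o$ with $A_o,B_o$ odd, the remaining odd parts of each side must agree, yielding the integer identity
\[
A_o \;=\; B_o\,(2C+1).
\]

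Finally, I would derive a size contradiction. On one hand $A_o \le A < 2^n$. On the other hand $B_o \ge 1$, and the assumption $C\in(2^{m-1},2^m)$ with $C$ an integer gives $C\ge 2^{m-1}+1$, so $2C+1\ge 2^m+3$. Combining,
\[
B_o\,(2C+1) \;\ge\; 2^m + 3 \;>\; 2^n,
\]
where the last inequality uses the hypothesis $m\ge n$. These two bounds are incompatible with $A_o=B_o(2C+1)$, completing the proof.

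The step I expect to be the main obstacle is not any single estimate but the bookkeeping of signs of exponents: the quantity $r-p+q-m$ may be negative, so the ``cancellation'' step must be read either as a statement about $\nu_2$ on rationals or, equivalently, as an integer equation obtained after multiplying through by a sufficiently large power of $2$. Once that is phrased correctly, the odd-part equality and the size comparison are essentially immediate, and the argument extends the reciprocal proof in a uniform way.
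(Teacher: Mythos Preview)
Your argument is correct and noticeably cleaner than the paper's. The paper first normalises $p=q=0$, disposes of boundary cases $A=B$, $A=2^{n-1}$, $B=2^{n-1}$, and then splits into two lemmas according to whether $A<B$ or $A>B$; in each lemma it clears denominators to reach an equation of the shape $2^{k}A=B(2C+1)$ and argues that the odd factor $2C+1$ forces $2^{k}\mid B$, contradicting the fact that the largest power of two dividing any $B\in(2^{n-1},2^n)$ is $2^{n-2}$. Your route avoids all of that branching: by taking $2$-adic valuations on the rational identity you peel off the odd parts in one stroke and land on $A_o=B_o(2C+1)$, after which the single size comparison $A_o<2^n\le 2^m<2^m+3\le B_o(2C+1)$ finishes things uniformly, with no need to track whether the quotient lands in $[1,2)$ or $[\tfrac12,1)$.

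What the paper's longer route buys is a marginally sharper range: its lemmas are stated and proved for $m\ge n-1$, and the $m=n-1$ case genuinely needs an extra step (the paper decomposes $B=2^{n-1}+B'$ there). Your estimate $B_o(2C+1)\ge 2^m+3$ uses only $B_o\ge 1$, which is exactly enough for $m\ge n$ but not for $m=n-1$. Since the theorem as stated asks only for $m\ge n$, your proof is complete; if you later want the paper's tighter bound, you would need to feed in $B_o\ge 3$ for $B\ne 2^{n-1}$ and handle $B=2^{n-1}$ separately, at which point some case analysis creeps back in.
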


\noindent
Notes:
\begin{enumerate} 
  \item When $n=m$, this result was obtained by \citet{markstein2000ia}.
  \item This theorem applies to the division of numbers with different precisions by taking $n$ as the maximum precision of the two.
  \item If $C/2^{m-1}\times 2^r$ is the largest representable floating-point number smaller than the quotient,
  then the exponent can be computed by
  $$r = \begin{cases} 
          p - q & \text{if } A \geq B,\\
          p - q - 1 & \text{if } B > A.
        \end{cases}
  $$
  \item The case $A=B$ results in $1$, so there is nothing to prove.
  \item The case $A=2^{n-1}$ reduces to reciprocal for which the result follows from the previous section.
  \item The case $B=2^{n-1}$ only concerns the exponent.  The significand is copied from the input to the output.
\end{enumerate}

Unlike the case for reciprocal, we must consider two cases in the proof:  the numerator's significand being larger than the denominator's and vice versa.
As in the proofs in the reciprocal case, we can assume without loss of generality that the inputs are in the interval $[1,2)$, so we have integers $A, B \in
[2^{n-1}, 2^n)$, and floating-point numbers $a = \frac{A}{2^{n-1}}$ and $b = \frac{B}{2^{n-1}}$.
The included endpoints correspond to an input being $1$.
If the denominator is $1$, the quotient is simply the numerator.
If the numerator is $1$, 
the quotient is a reciprocal,
so the result follows from Theorem~\ref{thm:reciprocal}.
We split the proof of the remaining cases into two lemmas,
one for  $A<B$ and one for $A>B$, 
and the theorem follows from the two lemmas.

\begin{lemma}
  \label{thm:midpoint-lt}
  Let $n > 1$ be a floating-point precision, and $A,B \in (2^{n-1},
  2^n)$ and $A<B$. There does not exist $m \ge n-1 $ and $C \in
  (2^{m-1}, 2^m)$  such that $\frac{A}{B} = \frac12 \left(\frac{C}{2^{m-1}} + \frac{1}{2^{m}}\right)$,
  where the factor of $1/2$ is to adjust for the fact that $A/B\in[1/2,1)$.
\end{lemma}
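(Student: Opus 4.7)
The plan is to clear denominators so that the lemma's hypothesised identity becomes an integer equation, and then derive a contradiction by comparing the $2$-adic valuations of the two sides.  Starting from
\[
\frac{A}{B} \;=\; \frac{1}{2}\left(\frac{C}{2^{m-1}} + \frac{1}{2^m}\right) \;=\; \frac{2C+1}{2^{m+1}},
\]
cross-multiplication gives the integer identity
\[
A \cdot 2^{m+1} \;=\; B \cdot (2C+1).
\]
The crucial observation is that $2C+1$ is odd, so all factors of $2$ on the right-hand side must come from $B$.

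Next I would bound the $2$-adic valuation of $B$.  Since $B$ is an integer lying strictly between the consecutive powers $2^{n-1}$ and $2^n$, it cannot itself be a power of $2$; writing $B = 2^k B'$ with $B'$ odd, the constraint $2^{n-1} < B < 2^n$ forces $B' \ge 3$, and hence $k \le n-2$.  Consequently $v_2(B) \le n-2$, so the right-hand side of the identity above has $2$-adic valuation at most $n-2$.

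On the left-hand side, using $v_2(A) \ge 0$ and the hypothesis $m \ge n-1$,
\[
v_2\!\left(A \cdot 2^{m+1}\right) \;=\; v_2(A) + m + 1 \;\ge\; 0 + (n-1) + 1 \;=\; n.
\]
Combining these two bounds forces $n \le n-2$, a contradiction, which completes the proof.

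There is no real obstacle here: once the midpoint equation has been cleared into an integer identity, the argument is a one-line $2$-adic valuation check.  The only step that deserves care is the bound $v_2(B) \le n-2$, which rests on $B$ lying strictly inside a binade (so that its odd part is at least $3$).  Notably, the hypothesis $A < B$ is not used in the valuation argument itself; it is needed only so that the midpoint formulation, with its leading factor of $1/2$, correctly describes a midpoint in the binade $(1/2,1)$ containing the quotient.
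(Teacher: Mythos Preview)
Your argument is correct and follows the same core idea as the paper: clear denominators to get $2^{m+1}A = B(2C+1)$, observe that $2C+1$ is odd, and bound the $2$-adic valuation of $B$ by $n-2$ using that $B$ lies strictly inside a binade. The paper, however, splits into two cases ($m\ge n$ and $m=n-1$), handling the boundary case via the decomposition $B=2^{n-1}+B'$; your single valuation inequality $v_2(B)\le n-2 < n \le m+1+v_2(A)$ covers both cases at once and is therefore the cleaner presentation.
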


\begin{lemma}
  \label{thm:midpoint-gt}
  Let $n > 1$ be a floating-point precision, and $A,B \in (2^{n-1},
  2^n)$ and $A>B$. There does not exist $m \geq n - 1$ and $C \in
  (2^{m-1}, 2^m)$  such that $\frac{A}{B} = \frac{C}{2^{m-1}} + \frac{1}{2^m}$.
\end{lemma}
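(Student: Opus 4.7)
The plan is to clear denominators and then extract information about powers of two on each side. Starting from the purported equality
\[
\frac{A}{B} \;=\; \frac{C}{2^{m-1}} + \frac{1}{2^{m}} \;=\; \frac{2C+1}{2^{m}},
\]
I would cross-multiply to obtain the integer identity
\[
A\cdot 2^{m} \;=\; B\,(2C+1).
\]
Since $2C+1$ is odd, the $2$-adic valuation $\nu_2$ of the right-hand side equals $\nu_2(B)$, while the left-hand side has valuation $\nu_2(A)+m$. Equating these gives the clean relation $m = \nu_2(B) - \nu_2(A)$.

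Next, I would exploit the strict inclusion $A,B \in (2^{n-1},2^{n})$. No integer in this open interval is a power of $2$, so the odd parts of $A$ and $B$ are odd integers strictly greater than $1$, hence at least $3$. Writing $A = 2^{\nu_2(A)} A'$ with $A'$ odd and $A'\ge 3$ (and similarly for $B$), the bound $A < 2^{n}$ forces
\[
3\cdot 2^{\nu_2(A)} \;\le\; A \;<\; 2^{n},
\]
so $2^{\nu_2(A)} < 2^{n}/3 < 2^{n-1}$, giving $\nu_2(A)\le n-2$, and likewise $\nu_2(B)\le n-2$.

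Combining the two steps, and using $\nu_2(A)\ge 0$, I get
\[
m \;=\; \nu_2(B) - \nu_2(A) \;\le\; \nu_2(B) \;\le\; n-2,
\]
which directly contradicts the hypothesis $m \ge n-1$. This closes the proof.

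The main obstacle, and the place where the argument could easily go wrong, is the bookkeeping on the $2$-adic valuation: one must notice that the \emph{strictness} of $A,B \in (2^{n-1},2^{n})$ is essential, because it forces the odd parts to be at least $3$ rather than just at least $1$. With only $A',B'\ge 1$ one would obtain $\nu_2(A),\nu_2(B)\le n-1$ and the contradiction with $m\ge n-1$ would collapse. Everything else (the factor of $2C+1$ being odd, the cross-multiplication, and the non-negativity of $\nu_2(A)$) is routine once this observation is made.
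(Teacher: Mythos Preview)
Your proof is correct and follows essentially the same route as the paper: cross-multiply to obtain $2^{m}A = B(2C+1)$, use that $2C+1$ is odd to force $2^{m}\mid B$ (you phrase this via $\nu_2$), and then bound the largest power of $2$ dividing any integer in $(2^{n-1},2^{n})$ by $2^{n-2}$ to reach the contradiction $m\le n-2$. Your tracking of $\nu_2(A)$ is slightly more precise than needed---the paper simply observes $2^{m}\mid B$ without it---but the argument is the same.
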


\begin{comment}
\begin{theorem}
  $\forall A, B \in [2^{n-1}, 2^n), A < B \implies \nexists C \in [2^{n-1}, 2^n), \frac{A}{B} = \frac{C}{2^n} + \frac{1}{2^{n+1}}$
\end{theorem}
\end{comment}

These two lemmas imply that the result of dividing two floating-point numbers (as real numbers)
can never produce a number at the exact midpoint between two representable floating-point numbers.
This means that when rounding a result of the division of two floating-point numbers,
we never have to consider the midpoint in round-to-nearest-even cases.

The bounds on these lemmas are sharp, as demonstrated by the following example
with $n=4$ and $m=2$,
 where we take $A = 1.875$ and $B = 1.5$,
the quotient is a midpoint for $2 = m \not\geq n - 1 =4-1 = 3$.

\begin{align} 
& 1.5 \times 1.25 =  1.875 \\
\intertext{and expressing $A$, $B$ and $C$ in binary is}
\iff & \frac{1100_2}{2^3} \times \frac{1010_2}{2^3} = \frac{1111_2}{2^3} \\
\iff & \frac{1111_2}{2^3} / \frac{1100_2}{2^3} = \frac{10_2}{2^1} + \frac{1}{2^2},
\end{align}
which would be a counterexample if the theorem did not contain the $m\ge n-1$ clause.

\begin{proof}[Proof of Lemma~\ref{thm:midpoint-lt}]
\textit{Case $m\ge n$:} We prove this by contradiction. 
Assume that the following
equation holds:

\begin{equation}\label{eq:AoverB}
\frac{A}{B} = \frac{C}{2^m} + \frac{1}{2^{m+1}}
\end{equation}

This can be further simplified:
\begin{align}
  % & \frac{A}{B} = \frac{C}{2^m} + \frac{1}{2^{m+1}} \\
  \iff & 2^{m+1}A = 2BC + B \\
  \iff & 2^{m+1}A = B(2C + 1) \\
\intertext{but since $2C+1$ must be odd, it cannot contain powers of $2$, so $B$ must be divisible by $2^{m+1}$,}
  \implies & 2^{m+1} | B. \label{eq:twoDividesB}
  \intertext{However, $B$ is in the interval $(2^{n-1}, 2^n)$, so the largest power of $2$ it can contain occurs when $B=2^{n-1}+2^{n-2}=3\cdot2^{n-2}$.
  Substituting this into equation~\eqref{eq:twoDividesB} yields $2^{m+1} | 3 \cdot 2^{n-2}$, 
  }
   \implies & m+1 \le n-2 \\
   \iff & m \le n-1.
\end{align}

\textit{This leaves the case $m= n-1$:} We prove this by contradiction. 
For this case, we decompose
\begin{align*}
    B &= 2^{n-1} + B',
\end{align*}
where $B'\in (0,2^{n-1}) = (0,2^m)$.
With this decomposition, equation~\eqref{eq:AoverB} becomes
\begin{equation}\label{eq:AoverBprime}
\frac{A}{2^m+B'} = \frac{C}{2^m} + \frac{1}{2^{m+1}}
\end{equation}

This can be further simplified:
\begin{align*}
  \iff & 2^{m+1}A = 2C(2^m+B') + (2^m + B') \\
  \iff & 2^{m+1}A = 2^{m+1}C + 2B'C + 2^m + B' \\
  \iff & 2^{m+1}A -2^{m+1}C-2^m= B'(2C + 1) \\
  \iff & 2^{m}(2A -2C-1)= B'(2C + 1) \\
\intertext{but since $2C+1$ must be odd, it cannot contain powers of $2$, so $B'$ must be divisible by $2^{m}$,}
  \implies & 2^{m} | B'.
\end{align*}
But this contradicts the fact that $B'\in(0,2^m)$.
\end{proof}

\begin{proof}[Proof of Lemma~\ref{thm:midpoint-gt}]
Again, we prove by contradiction. Then we assume that the following
equation holds:

\[
\frac{A}{B} = \frac{C}{2^{m-1}} + \frac{1}{2^{m}}
\]

This can be further simplified:
\begin{align*}
     & \frac{A}{B} = \frac{C}{2^{m-1}} + \frac{1}{2^{m}} \\
     \iff & 2^mA = 2BC + B \\
     \iff & 2^mA = B(2C + 1) \\
     \intertext{but since $2C+1$ must be odd, it cannot contain powers of two, so $B$ must be divisible by $2^m$,}
     \implies & 2^m | B.
     \intertext{However, $B$ is in the interval $(2^{n-1}, 2^n)$, so the largest power of $2$ it can contain occurs when $B=2^{n-1}+2^{n-2}$,}
   \implies & m \le n-2. \\
\end{align*}

% Since $(2C + 1)$ is an odd number, it must be the case that $2^n$ divides $B$.
% However, $B$ itself is in the interval $[2^{n-1}, 2^n)$ and as such,
% can never be divisible by $2^n$.

\end{proof}

\begin{comment}
  TODO:  Explain Counterexample

  SP = 1.00 DB 1.00000

  a = 1.1 * 1.001   A 0b110110
  b = 1.1           B 0b110000
  c = 1.00         C 0b100

  $2^(m+1)$0b110110 =  (0b110000) * (2*(0b100) + 1)
  = 0b110000 * (0b1001)
  = 0b110110000
The theorem does not hold if $m < n$, which is relevant when rounding results
from a higher precision estimate to a lower precision result. We show a
counterexample:

Let our input precision be $n = 6$, and $A = 54$, $B = 48$.
Let our output precision be $m = 3$. The division $\frac{A}{B}$
is equivalent to $1.125$, which is not representable in the output
precision, so it is rounded to $1$, hence $C = 4$. Then:

\begin{align*}
  2^{m+1} A &= B (2C + 1) \\
  2^7 * 54 &= 48 (8 + 1) \\
\end{align*}
\end{comment}

% The theorem above holds for $A < B$, in which case $A/B\in(1/2,1)$. 
% We need a similar theorem and proof for $A>B$, in which case $A/B\in(1,2)$.

\subsection{Theorems on Square Roots}
We already have the following property on square root:
If the input is not a perfect square, then the result will
be irrational, which implies irrepresentability in any precision.
Thus, we only need to prove that perfect squares
do not occur on midpoints.

We define perfect squares in a set of numbers as a number that can be written
as a square of another number in the same set.
For example, in the rationals,

\[
\left\{ y \in \mathbb{Q} : \exists x \in\mathbb{Q}, y = x^2\right\}.
\]

We can specialize the rational numbers to our floating-point representation,
and similarly to the previous cases take advantage of the fact that the square root preserves binades---this time mapping two adjacent binades to one.
We only need to prove our result on binades $[1,2)$ and $[2,4)$ which both have images in $[1,2)$.
For a perfect square in the first binade to be at a midpoint, we have significands 
$A, B \in (2^{n-1}, 2^n)$
such that

\begin{equation}\label{eq:sqrat}
\frac{A}{2^{n-1}} = \left(\frac{B+1/2}{2^{n-1}} \right)^2 = \frac{(B + 1/2) \cdot (B + 1/2)}{4^{n-1}}.
\end{equation}

If there does exist a pair of integers $(A, B)$ that satisfies Equation~\eqref{eq:sqrat},
then we have found a perfect square whose square root is at a midpoint.
However, we prove the converse: that there are no roots of perfect squares on a midpoint.

\begin{lemma}
  $\forall A \in (2^{n-1}, 2^n)$, $B \in (2^{n-1}, 2^n), \frac{A}{2^{n-1}} \neq \frac{(B+1/2) \cdot (B+1/2)}{4^{n-1}}$
\end{lemma}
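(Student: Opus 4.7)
The plan is to proceed by contradiction, clear denominators to obtain an integer equation, and then extract a contradiction from the parity of the two sides. Concretely, suppose there exist integers $A, B \in (2^{n-1}, 2^n)$ such that
\[
\frac{A}{2^{n-1}} = \frac{(B+1/2)^2}{4^{n-1}}.
\]
First I would multiply through by $4^{n-1} = 2^{2(n-1)}$ to eliminate the denominators on both sides, yielding $A \cdot 2^{n-1} = (B + 1/2)^2$.

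Next I would clear the half-integer by multiplying both sides by $4$, giving the integer equation
\[
2^{n+1}\, A \;=\; (2B + 1)^2.
\]
At this point the proof essentially ends: the right-hand side is the square of an odd integer and is therefore odd, while the left-hand side is divisible by $2^{n+1}$, which is at least $2^3 = 8$ since the hypothesis gives $n > 1$. This is an immediate contradiction, and we are done.

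The key step is the reduction to the integer equation $2^{n+1} A = (2B+1)^2$, and the main (indeed only) idea is the parity mismatch between an odd perfect square and a multiple of a nontrivial power of two. There is no real obstacle here; unlike the division lemmas, we do not need a case split on whether $A < B$ or $A > B$, and we do not need to decompose $B$ as $2^{n-1} + B'$, because the factor of $1/2$ inside the square already forces the right-hand side to be odd after clearing denominators. Note also that the argument never uses the lower bound $A, B > 2^{n-1}$, only that $B$ is a nonnegative integer (so that $2B+1$ is a positive odd integer) and $n > 1$, so the conclusion is in fact stronger than the statement.
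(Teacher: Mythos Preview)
Your proof is correct and follows essentially the same approach as the paper: assume equality, clear denominators, and derive a parity contradiction. Your version is slightly cleaner in that you keep the right-hand side factored as $(2B+1)^2$ rather than expanding it, arriving at $2^{n+1}A = (2B+1)^2$ instead of the paper's (equivalent, modulo a harmless arithmetic slip) $2^{n+3}A = 16B^2 + 16B + 1$, but the underlying idea---even left side versus odd right side---is identical.
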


\begin{proof}
To prove this by contradiction, 
assume that such $A$ and $B$ exist.
Then

\begin{align*}
\frac{A}{2^{n-1}} &= \frac{B^2}{4^{n-1}} + 2\frac{B/2}{4^{n-1}} + \frac{1}{4^{n+1}}\\
\iff \frac{A}{2^{n-1}} &= \frac{B^2}{4^{n-1}} + \frac{B}{4^{n-1}} + \frac{1}{4^{n+1}}\\
\iff 2^{n+3}A &= 16B^2 + 16B + 1
\end{align*}

Which is impossible, because $2$ divides all of the terms except the $1$.
\end{proof}

For inputs in the second binade,
$[2, 4)$, we also have to prove the same lemma in the next interval.
The only difference is an extra power of two due to the different exponent on the input.

\begin{lemma}
$\forall A \in (2^{n-1}, 2^n)$, $B \in (2^{n-1}, 2^n), \frac12\cdot\frac{A}{2^{n-1}} \neq \frac{(B+1/2) \cdot (B+1/2)}{4^{n-1}}$
\end{lemma}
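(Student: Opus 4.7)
The plan is to follow the template of the preceding lemma almost verbatim: argue by contradiction, clear denominators by multiplying through by an appropriate power of two, expand $(B+1/2)^2$, and then derive a contradiction by parity between an even left-hand side and an odd right-hand side. The only modification introduced by the extra factor of $\frac12$ on the left-hand side is a single extra power of two in the bookkeeping.

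First, I would assume for contradiction that some $A,B\in(2^{n-1},2^n)$ satisfy $\frac12\cdot\frac{A}{2^{n-1}}=\frac{(B+1/2)^2}{4^{n-1}}$, rewrite the left-hand side as $\frac{A}{2^n}$ to make the algebra cleaner, and then multiply through by a power of two large enough to absorb both the $\frac14$ appearing in the expansion of the square and all remaining denominators on the right. This should produce an equation of the shape $A\cdot 2^k = 4B^2+4B+1$ for some positive integer $k$, directly paralleling the $2^{n+3}A = 16B^2+16B+1$ obtained in the companion lemma.

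The contradiction then comes for free from parity: the right-hand side is always odd (as $4B^2+4B$ is even and we add $1$), whereas the left-hand side is divisible by $2$ whenever $k\ge 1$. The only substantive obstacle is the bookkeeping step of verifying that $k\ge 1$ under the standing hypothesis $n>1$; since the extra $\frac12$ only shifts the exponent by one from the previous lemma's calculation, and the exponent there was comfortably large, I expect this to go through without incident. If desired, one could state the conclusion a bit more strongly as $A\cdot 2^{n+2}=(2B+1)^2$, which makes the parity obstruction immediate.
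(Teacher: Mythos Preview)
Your proposal is correct and follows essentially the same route as the paper: assume equality, clear denominators, expand $(B+\tfrac12)^2$, and obtain a parity contradiction between an even left-hand side and an odd right-hand side. One small bookkeeping slip: the cleared equation is $A\cdot 2^{n}=4B^2+4B+1=(2B+1)^2$, not $A\cdot 2^{n+2}$, so the extra factor of $\tfrac12$ actually \emph{lowers} the exponent by one relative to the previous lemma rather than raising it; since $n>1$ this still leaves $k=n\ge 2$ and the parity argument goes through unchanged.
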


\begin{proof}
We can prove by contradiction on the representability of $B$:

\begin{align*}
\frac{A}{2^{n-2}} &= \frac{B \cdot B}{4^{n-1}} + 2\frac{B/2}{4^{n-1}} + \frac{1}{4^{n+1}}\\
\iff \frac{A}{2^{n-2}} &= \frac{B^2}{4^{n-1}} + \frac{B}{4^{n-1}} + \frac{1}{4^{n+1}}\\
\iff 2^{n+4}A &= 16B^2  + 16B + 1
\end{align*}

Which is impossible, because $2$ divides all of the terms except the $1$.
\end{proof}

\newpage
Together with the observation that input exponents determine output exponents on the interior of the binades and that the boundaries, being powers of two 
are either the squares of powers of two (for even powers) or irrationals (for odd powers), these lemmas prove the following theorem:

\begin{theorem}
    The square
    root of a floating-point number which is not a power of two cannot be the midpoint between
    two representable numbers.
\end{theorem}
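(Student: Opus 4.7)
The plan is to reduce the theorem to the two significand-level lemmas just proved by exploiting the fact that square root halves the exponent of a normalized floating-point number. I would begin by writing the input in normalized form $x = (M/2^{n-1}) \cdot 2^e$ with $M \in (2^{n-1}, 2^n)$; the strict inequalities hold because $x$ is, by hypothesis, not a power of two. The binade in which $\sqrt{x}$ lands is then determined by the parity of $e$ together with the range of $M$, which suggests splitting the argument into an even-exponent and an odd-exponent case.

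In the even case $e = 2k$, write $\sqrt{x} = \sqrt{M/2^{n-1}} \cdot 2^k$. Since $M/2^{n-1} \in (1,2)$, the root lies in $(2^k, \sqrt{2}\cdot 2^k)$, strictly inside the binade $[2^k, 2^{k+1})$, so any midpoint of that binade takes the form $((B+\tfrac12)/2^{n-1}) \cdot 2^k$ for some integer $B$. Equating, squaring, and cancelling the common factor $2^{2k}$ produces exactly the identity that the first lemma forbids. In the odd case $e = 2k+1$, I would rewrite $x = (M/2^{n-2}) \cdot 2^{2k}$ so that $\sqrt{x} = \sqrt{M/2^{n-2}} \cdot 2^k$ with $M/2^{n-2} \in (2,4)$, placing $\sqrt{x}$ in $(\sqrt{2}\cdot 2^k, 2^{k+1})$, still inside the single binade $[2^k, 2^{k+1})$. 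The analogous midpoint equation, squared and cleared of the $2^{2k}$, reproduces the identity ruled out by the second lemma.

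The only genuine obstacle is the bookkeeping around the output binade: I need to check that the midpoint actually lies inside one binade (so that the half-integer significand form $(B+\tfrac12)/2^{n-1}$ is valid) and that the integer $B$ produced by squaring falls within the range $(2^{n-1}, 2^n)$ required by the lemmas. Because $\sqrt{M/2^{n-1}} \in (1, \sqrt{2})$ in the even case and $\sqrt{M/2^{n-2}} \in (\sqrt{2}, 2)$ in the odd case, the value $B + \tfrac12$ is strictly between $2^{n-1}$ and $2^n$ in both cases, so no binade-boundary situation escapes the lemmas and the theorem follows.
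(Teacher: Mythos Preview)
Your proposal is correct and follows essentially the same route as the paper: both arguments exploit that square root sends two adjacent input binades to a single output binade, reduce to the two significand-level lemmas (your even/odd exponent cases are exactly the paper's $[1,2)$ and $[2,4)$ input binades), and dispose of the binade boundaries separately. Your write-up is in fact more explicit than the paper's about verifying that the output lands strictly inside a single binade and that the resulting $B$ falls in the range the lemmas require.
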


\medskip
\section{Correction Algorithm\label{sect:implementation}}
Supported by the theorems above, and inspired by SRT division \cite{robertson1958new,tocher1958techniques},
we have devised an algorithm for correcting errors in floating-point reciprocal, division and square root estimates. 
We call this algorithm the Final Correction method 
because it produces correctly-rounded results,
so no final rounding step is needed. 
It is a residual-based correction method, and therefore applicable to all algebraic functions and other functions
for which residuals can be calculated efficiently.
In the case of reciprocal, division and square root,
only addition, subtraction, and multiplication are required, but for performance estimates, we will measure in multiples of the computation required for a fused multiply-add operation, 
since many processors are designed with the pipelinability and minimal latency of these units in mind.

We will explain our algorithm using the reciprocal function as an example.
We start with an approximation of the reciprocal of a given input $x$ in the target precision, which
we denote as $approx(\frac{1}{x})$.
Note that a reduced-precision estimate can easily be extended with zero bits, 
or a higher-precision estimate can be truncated to the target precision.

Since reciprocal maps each binade to another binade,
with the endpoints (powers of two) mapping to other endpoints,
we will assume the input is not a power of two.
We further assume that the approximation is in the correct binade.
This may be automatic for high-accuracy approximations,
or approximations for which the approximation for inputs near the endpoints of
the binade is known to fall inside the correct output binade,
or for fixed-point representations that extend to values outside the binade,
or it may require projecting approximations outside the correct binade to the first or last representable value in the binade,
but this is easy to do since the output binade is determined by the input exponent.  
As a result, we can assume that the ulp values for floating-point numbers between the approximation and correctly-rounded result are the same,
and refer to this value as $ulp$.
In the round-to-nearest-even case, the intervals of equal correction are separated by 
the values 
\[
\left\{ \frac{1}{x} - (k+\frac12)ulp\ :\ k \in \mathbb{Z} \right\},
\]
as shown in Figure~\ref{fig:numberline}.
The problem is that $1/x$ is not representable, in general, so neither are the boundary points we need to test against.
This is where the residual
\[
r = 1 - approx(\frac{1}{x}) \cdot x
\]
comes in.
The residual is representable in double the target precision
because it is calculated using one multiplication of numbers representable in the target precision.
So all the tick marks in the blue number line at the bottom of Figure~\ref{fig:numberline} are representable in double the target precision.
The mapping between number lines is linear,
so we can do our comparisons using the blue number line, against multiples of 
$\delta = x \cdot ulp$.

\begin{figure}
  \centering
  \includegraphics[width=0.5\textwidth]{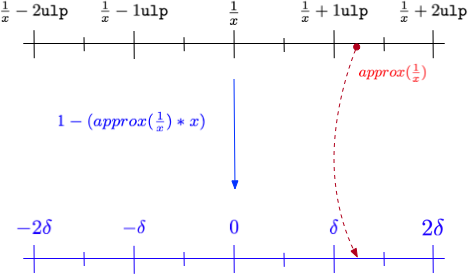}
  \caption{Number line of the ulp error and $\delta$}
  \label{fig:numberline}
\end{figure}

For practical reasons,
at this point, we must assume that we have upper and lower bounds on the error or equivalently on the residual.
For example, if we knew that all errors had magnitude at most $3ulp$,
then the residual must lie in one of the intervals 
\begin{equation} \label{eq:intervals}
(-3.5\delta, -2.5\delta), (-2.5\delta, -1.5\delta), (-1.5\delta, -0.5\delta), (-0.5\delta, 0.5\delta),
(0.5\delta, 1.5\delta), (1.5\delta, 2.5\delta), (2.5\delta, 3.5\delta).
\end{equation}
Note that these intervals
are defined with round-to-nearest modes in mind; the divisions would be
at the whole numbers and not at the midpoints for other rounding modes.
Note also that the intervals are open,
which means we never need to test for equality.
This is because the boundary values correspond to values $1/x+(k+0.5)ulp$ on the black number line,
but by Corollary~\ref{cor:recipMidpoint}, 
the values $1/x$ (for representable $x$) are never midpoints;
therefore, the values $1/x+(k+0.5)ulp$ are never representable,
and cannot be $approx(\frac{1}{x})$ which by definition is representable.

In this situation, 
we can quickly eliminate all but two adjacent intervals using the leading bits of the residual $r$,
and since this operation depends only on the bound on the error,
we could implement this as a table lookup, or as a small integer multiply-add using the leading bits of $r$ and $approx_recip$.

Putting this all together:

\begin{enumerate}
\item We have an approximation in the target precision, $approx(\frac{1}{x})$, with a (small) bound on the error.
\item We calculate the residual from the input and the approximation, and use it
  to look up a correction factor $c$, to be used in the final correction.
  We also use the leading bits of the residual to determine an adjacent pair of intervals (from \eqref{eq:intervals}) that contain the residual. 
  The boundary between the pair of intervals is called the branch point.
\item Based on a comparison between $r$ and the branch point,
we correct by either $c$ or $c+1$ ulps.
\end{enumerate}
We can encapsulate this procedure as

\begin{algorithm}[H]
  \caption{Final Correction Procedure}
  \label{alg:finalroundinginit}
  \begin{algorithmic}
  \Ensure $result = round(\frac{1}{x})$
    \State $r \gets 1 - x * approx\_recip$ 
    \State $c \gets table[ leading\_bits(r*approx\_recip) ]$ 
        \Comment{$c\in$[-5,-3,-1,1,3,5], values on blue number line}
    \State $branch\_point \gets c * 0.5 * (ulp(1/x) * x) $ \Comment{this is a
      fixed shift of x} 
    \If{$r < branch\_point$} 
      \State $result \gets approx\_recip + c$ 
    \Else
      \State $result \gets approx\_recip + c + 1$
    \EndIf
  \end{algorithmic}
\end{algorithm}

Note that without the theorems of the previous section
we would need to test for equality as well as inequality,
and in the equality case adjust $c$ based on the last bit of $approx(1/x)$ and multiple bits of $c$.

Algorithm~\ref{alg:finalroundinginit} describes the method that works for correcting
both positive and negative ulp errors. 
However, we can simplify hardware implementations by ensuring that all corrections are positive.
Since we know the maximum positive error in ulps, 
we can subtract this value from the estimate before beginning this calculation (or, even better, as part of the approximation process itself to avoid additional costs).
This would result in only non-negative values for the lookup value $c$ and the calculated $r$.

\subsection{Simplifications}

The above algorithm is applicable to other functions and different estimates,
but if we assume that this correction will be implemented in hardware,
it is likely that the estimate would also be implemented in hardware,
and so we can make adjustments to the estimate to simplify the overall size of the circuits.
If we can assume that the estimate is always an underestimate in the correct binade,
then the correction will always be a positive number of ulps and there is only one ulp size
for both the estimate and the correctly rounded result.
We have a patent pending covering one efficient way of calculating such an estimate.

With these assumptions, we arrive at Algorithm~\ref{alg:finalroundingoptim},
in which we remove the multiplication by $0.5$ of $ulp(1/x) * x$. This ensures
we will not drop any relevant bits in the calculation when later performed using fixed-point operations. Furthermore,
we do not use a table lookup to calculate $c$; instead, we apply rounding when extracting the leading bits of the product $r * approx\_recip$. 
The branch point
is obtained by multiplying $ulp\_x$ with $2 *c + 1$. Finally,
the residual is multiplied by $2$ before the comparison as $ulp_x$ is no
longer multiplied by $0.5$ in $branch\_point$.

Algorithm~\ref{alg:finalroundingoptim2} performs the identical computation of
Algorithm~\ref{alg:finalroundingoptim} except it is now done with a fixed
point representation as described in Section~\ref{sect:theorems}. For
brevity, we have also renamed the variable names, and this follows
our C implementation of the algorithm. Furthermore, we adopt the notation
$X_{(p,f)}$, where a variable $X$ has $p$ total bits and $f$ bits in the fraction 
part. The indexing notation $X[u:l]$ extracts the bits of $X$ from position
$u$ to position $l$, where positions are in little-endian order (i.e., the place-value of the lower bit is $2^l$). 
This notation is consistent with the usual visualization of bits 
as binary numbers with place values going from high to low as you go from 
left to right, and can be used in a vhdl specification of the circuits.

% 0b1.111 * 0b1.111 = 0b11.100001 (15*15=225)
% 1.875  * 1.875  = 3.515625 = 2 * 1.7578125
%

\begin{algorithm}[H]
  \caption{Simplified Final Correction for $\operatorname{round}(\frac1x)$}
  \label{alg:finalroundingoptim}
  \begin{algorithmic}
    \State $r \gets 1 - x * approx\_recip$ \Comment{By assumption $r\ge0$}
    \State $ulp\_x \gets (ulp(1/x) * x) $
      \Comment{This is a fixed shift of $x$}
    \State $c \gets leading\_bits(r*approx\_recip)$ 
    \Comment{$c\in\{0, 1, 2, 3, 4, 5\}$}
    \State $branch\_point \gets (2*c+1) * ulp\_x$ 
    \If{$2*r < branch\_point$} 
      \State $result \gets approx\_recip + c$   
    \Else
      \State $result \gets approx\_recip + c + 1$  
    \EndIf
  \end{algorithmic}
\end{algorithm}

% X = 0b1001  1 + 0.125
% Y = 0b1110  0.5 + 0.25 + 0.125
% X*Y =  0b1111110
% ONE = 0b10000000

\begin{algorithm}[H]
  \caption{Simplified Final Correction for $\operatorname{round}(\frac1x)$ with fixed point representation}
  \label{alg:finalroundingoptim2}
  \begin{algorithmic}
    \State $R_{(48,47)} \gets (1*2^{47})_{(48,47)} - X_{(24,23)} * Y_{(24,24)}$ \Comment{Fig~\ref{fig:circuit}(2)}
    \State $C_{(4,24)} \gets 
    \left(R[26:22]*
          Y[23:20]+
          2^4_{(5,29)}\right) [7:5] $ 
    \Comment{Fig~\ref{fig:circuit}(3)}
    \State $B_{(28,47)} \gets (2*C_{(4,24)} + 1) * X_{(24,23)}$ 
      \Comment{Fig~\ref{fig:circuit}(5)}
    \If{$2 * R_{(48,47)} < B_{(28,47)}$} \Comment{Fig~\ref{fig:circuit}(6)}
      \State $result \gets Y_{(24,24)} + C_{(4,24)}$   \Comment{Fig~\ref{fig:circuit}(5)}
    \Else
      \State $result \gets Y_{(24,24)} + C_{(4,24)} + 1$  \Comment{Fig~\ref{fig:circuit}(4)}
    \EndIf
  \end{algorithmic}
\end{algorithm}

\begin{figure}[h!]
  \centering
  \includegraphics[width=0.65\textwidth]{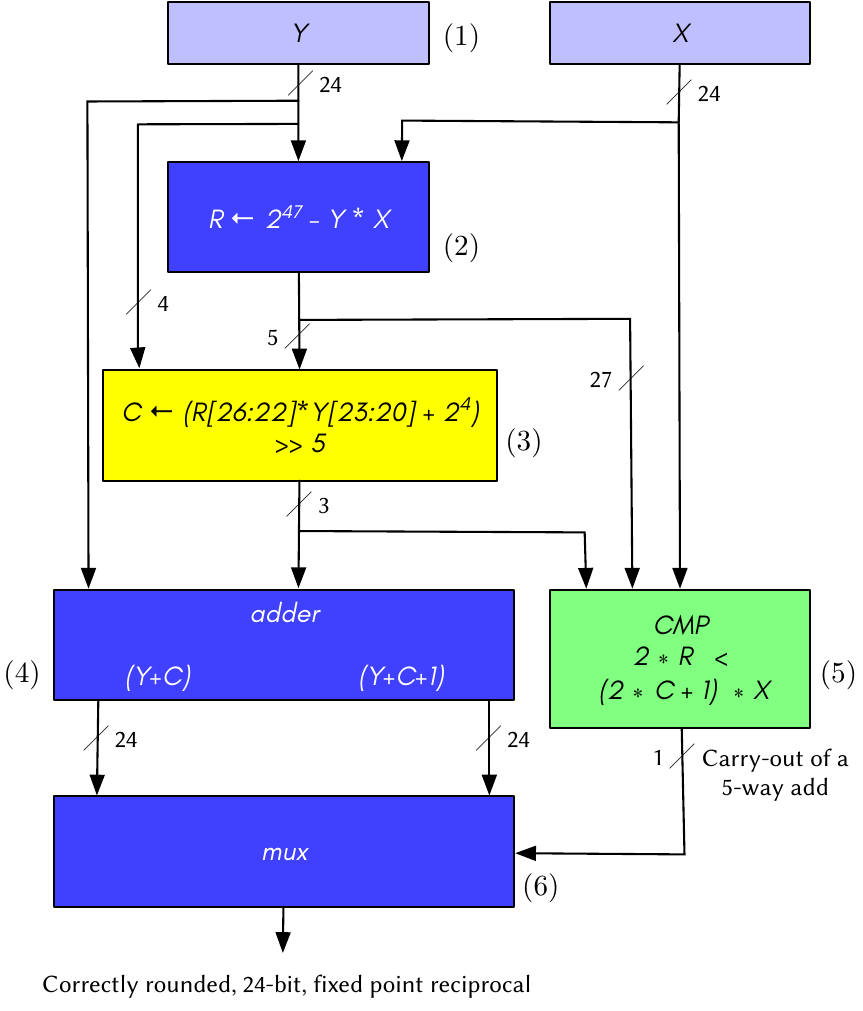}
  \caption{Circuit diagram of the Final Correction method for single-precision reciprocal}
  \label{fig:circuit}
\end{figure}

This algorithm can be translated into a very efficient hardware implementation using commonly available
circuit blocks, which we show in Figure~\ref{fig:circuit}. 
This block structure would work for any precision,
but the numbers of bits and bits used in the multiplications 
are specific to IEEE single-precision reciprocal.
It has the properties that
\begin{enumerate}
\item The reciprocal approximation comes rounded to the target precision (e.g., 23-bits).
\item The residual, $R$, is calculated with fixed-point arithmetic with twice the target precision, i.e., the natural precision for the multiplication it contains.  
  In the algorithm, it is labeled by this naively calculated precision,
  but in the C implementation below, 
  we take advantage of preconditions on $Y$ (namely that it is an approximate inverse of $X$) and assert that $R$ in fact contains at most 27 non-zero bits, due to cancellation in the subtraction.
\item The correction factor $C$ is calculated by multiplying the leading bits of the residual and the leading bits of the approximation $Y$. 
  Again, in the algorithm, we label $C$ with the maximum number of bits 
  given the operations, but preconditions imply that it has at most $3$ non-zero bits, and we check this with an assertion in the C code below.
\item Both candidate results can be computed using a special adder block, which simultaneously adds $C$ and $C + 1$
  to the approximation.
\item Since $C$ only has 3 bits,
  the comparison can be performed by a 5-way adder, adding the four partial
  products and subtracting $2*R$, generating a selector bit
  for the next block;
\item The multiplexer selects the result based on the selector bit and outputs
  a correctly rounded, fixed-point result.
\end{enumerate}

\begin{comment}
From the procedure above we devised a hardware implementation in the form of
circuit blocks in Figure~\ref{fig:circuit}. The implementation focuses on the
details required to implement IEEE single-precision reciprocal results:

\begin{enumerate}
\item The reciprocal approximation is rounded to target precision (e.g., 23-bits).
\item $x$ shifted to perform multiplies.
\item Calculate residual, a fixed-point value with twice the precision.
\item The correction factor $c$ can be looked up in a table, because the leading bits of $r$ are enough to identify a pair of intervals containing $r$.
\item Here we generate the results to select from, by adding $c$ and $c + 1$
  respectively to the approximation.
\item The comparison is done via a 4-way adder, with shifted inputs masked by the bits of $c$,  generating a selector bit
  for the next block;
\item The multiplexer selects the result based on the selector bit and returns
  a correctly rounded, fixed-point result.
\end{enumerate}
\end{comment}

Note that the possible values of $C$, the size of the multiplication, and the possible presence of an addend for rounding in Fig~\ref{fig:circuit} (3) the size of the 4-way add used for comparison in Fig~\ref{fig:circuit} (6)
all depend on the maximum correction required, and
therefore, the accuracy of the approximation. 

\medskip
The C code below details the implementation of 
Algorithm~\ref{alg:finalroundingoptim2} using unsigned 64-bit integers.
To keep track of the structure of these fixed-point
numbers, we use the convention
\texttt{X\_p\_f} denoting a variable \texttt{X} whose 
least-significant
\texttt{p} bits may be nonzero, 
with the least-significant \texttt{f} bits being the fraction part.
If \texttt{f}$>$\texttt{p}, it means that the binary point is
to the left of the most-significant bit in the fixed-point number.
In other words, some of the leading fraction bits are structurally zero,
and don't need to be stored.
Assertions are inserted to record information about the variables.
Hardware implementers can use this information to 
optimize the widths of the data paths.
For example,
we know that the calculation of the residual \texttt{R\_48\_47}
involves significant cancellation.
To match the code with the block diagram,
comments are inserted with numbers corresponding to
individual blocks in Figure~\ref{fig:circuit}.

\begin{lstlisting}[breaklines=true,language=C,lineskip=-1pt,mathescape=true]
unsigned long long recip_final_correction(
  unsigned long long X_24_23, // binary point at position 23 
                              // interval [1, 2)
  unsigned long long Y_24_24  // interval (0.5, 1]                       // Fig 4(1)
                                          ) {
  // x == 1 <=> 1/x == 1
  if (X_24_23 == ONE_24_23) {
    return ONE_24_23;
  }

  // HALF_24_24 has shifted place value; it shares the same bit pattern as ONE_24_23
  unsigned long long HALF_24_24 = ONE_24_23;
  // approximation is left of the interval [0.5, 1)
  if (Y_24_24 < HALF_24_24) {
    Y_24_24 = HALF_24_24;
  }

  // Precondition:  Y > 1/X, ie R_48_47 is positive
  assert(ONE_48_47 >= X_24_23 * Y_24_24);
  unsigned long long R_48_47 = ONE_48_47 - X_24_23 * Y_24_24;            // Fig 4(2)
  
  // Since Y is with 8ulps of 1/X, we know
  assert(R_48_47 < (1ull << 27));
  unsigned long long R_27_47 = R_48_47;
  
  // 5bit-by-4bit mult, with rounding (+ (1 << 4)) = up to and including 5 ulps
  //        C = ((R_27_47 >> 22) * (Y_24_24 >> 20) + (1 << 4)) >> 5;
  unsigned long long R_5_25 = R_27_47 >> 22;
  unsigned long long Y_4_4   = Y_24_24 >> 20;
  assert(R_5_25 < 32);
  assert(Y_4_4 < 16);
  unsigned long long C_4_24 =
    (R_5_25 * Y_4_4 + (1 << 4)) >> 5;
  // Precondition:  Y - 1/X < 8ulps
  assert(C_4_24 < 8);
  unsigned long long C_3_24 = C_4_24;                                   // Fig 4(3)

  unsigned long long B_28_47 = ((2 * C_3_24 + 1) * X_24_23);

  // R_27_47 * 2 = R_28_47
  if ((R_27_47 * 2) < B_28_47) {                                         // Fig 4(5,6)
    return Y_24_24 + C_3_24;                                             // Fig 4(4)
  } else {
    return Y_24_24 + (C_3_24 + 1);                                       // Fig 4(4)
  }
}
\end{lstlisting}

The code above performs a 5bit-by-4bit multiplication, followed 
by rounding, to correct up to and including 7 ulps of error. 
% The reference to Appendix B doesn't give the intended
% section number
Appendix~B lists two other versions
of the correction, one which performs 4-by-3
multiplication with no rounding and corrects errors up to
and including 3 ulps, and another which performs
5-by-3 multiplication with rounding, and corrects errors up
to and including 6 ulps.

\subsection{Testing}

For single precision (and lower) it is easy to exhaustively test all possible inputs,
including all possible errors.
We have tested our C implementation exhaustively on all single-precision values in the open interval $(1.0,2.0)$ 
with all errors in the set $\{-7ulp,-6ulp,-5ulp,-4ulp,-3ulp,-2ulp,-1ulp,0ulp\}$
for which the resulting input to be tested is in the same interval (i.e., binade),
and verified that the correctly rounded result is returned in all cases.
All code for testing is included in Appendix A, 
and can be compiled with the code in the previous section.

\section{Prior Work\label{sect:priorwork}}
As already stated in Section~\ref{sect:theorems}, proofs of the properties of
midpoint and exact results have been discovered independently by several others.
We specialize our proofs to the functions used in our implementations. In his
work, \citet{marksteinComputationElementaryFunctions1990a} also uses
the fact that midpoints do not exist for division/square root to deduce that
two \texttt{fma} instructions, one for computing the residual, and one for
applying the correction can yield a correctly-rounded result assuming the error
is at most one ulp. Our work extends this and provides a more powerful and more efficient correction procedure that extends the tolerance of the error bounds, which can be
adjusted according to the accuracy of the approximation step,
at the expense of introducing a new instruction.

Correctly rounding elementary functions is inextricably linked to the precision
of the computation. The theorems
presented in \citet{brisebarreCorrectRoundingAlgebraic2007b} and
\citet{iordacheInfinitelyPreciseRounding1999}
make clear the bounds on accuracy for which to perform intermediate calculations
in order to correctly round these functions. Though they focus on providing
these bounds in great detail, they do not make explicit that any approximation
does not fall on the exact intervals. Their work is complementary to our approach,
as the results can be used in the step used to generate the approximation.

\citet{harrisonMachineCheckedTheoryFloating1999b} developed a landmark in automatic
proof checking by using HOL-Light as a proof assistant for
verifying theorems about floating-point operations. This work created a new level of confidence for such computation, and was later extended by Harrison to verify implementations of division
\cite{harrisonFormalVerificationIA642000a} and trigonometric functions
\cite{harrisonFormalVerificationFloating2000}. This work was motivated by the
very costly Intel Pentium FDIV bug \cite{pricePentiumFDIVFlawlessons1995}, which
drove the demand for verification of floating-point algorithms. While our work
does not include formal verification, the properties identified and arguments made lend themselves to formalization, and could be incorporated into a formal proof of the correctness of a floating point unit.

\begin{comment}
Exploiting properties of rounding elementary functions in a selected IEEE mode
has been done before, as in \cite{marksteinComputationElementaryFunctions1990a}.
The paper has proved that the residual of an approximation cannot be at
the midpoint of two representable numbers; from this it claims that two
\texttt{fma} instructions, one for computing the residual, and one for applying
the correction can yield a correctly rounded result assuming the error is less
than one ulp. Our work extends on this and provides a case-by-case proof for
division, and provides a more generalized rounding procedure that extends the
tolerance of the error bounds, which can be adjusted according to the accuracy
of the approximation step.
\end{comment}

\section{Conclusions and future work}

In this paper, we proved theorems about the irrepresentability of results for floating point reciprocal, division and square root functions, 
and immediately used them to simplify the computation of floating-point corrections
to estimates of these functions.
These results are applicable to all rounding modes, 
and can be extended to other algebraic functions (functions definable as the solution of an algebraic equation).

Some of the irrepresentability results were known, but the more general result that reciprocals are not representable even in higher (and even slightly lower) precision, and the tightness of the bound, is new.
Such results would be useful in implementing operations with mixed precision, e.g., dividing a single-precision value by a double-precision value and returning a single-precision value. 

In future work, we plan to extend our proofs to other non-IEEE-standard functions,
such as reciprocal square root, and consequently extend
the Final Correction algorithm for these cases, including an optimized
hardware implementation of these functions.
The results should also be extended to mixed-precision cases of interest in 
machine learning and mathematical optimization.

While the focus of this paper is to present a novel correction scheme to replace final rounding, and to present the enabling irrepresentability theorems, we have not compared the performance of our method to approximation and rounding approaches in the literature. 
We anticipate that the highest impact applications of this approach will be in floating-point units of general-purpose processors and specialized co-processors.
Since the most relevant processor families do not make architectural details public,
performance comparisons for such processors is not possible.
It would, however, be feasible to compare performance when implemented
in an FPGA.
For example,
\citet{goldbergFPGAImplementationPipelined2007},
 report on an FPGA implementation of IEEE double-precision
division. In particular, it reports on the latencies and hardware resources
used for the implementation.

TODO ******  check yellow box / code / algorithm 3

\section*{Acknowledgement}
We acknowledge the support of the Natural Sciences and Engineering Research Council of Canada (NSERC), funding reference number CRDPJ-536628-2018.

% \begin{acks}
% We thank the IBM Centre for Advanced Studies and NSERC for financial
% support. We also thank Jose Moreira of IBM Research for valuable discussion during this work.
% \end{acks}

% trigger a \newpage just before the given reference
% number - used to balance the columns on the last page
% adjust value as needed - may need to be readjusted if
% the document is modified later
%\IEEEtriggeratref{8}
% The "triggered" command can be changed if desired:
%\IEEEtriggercmd{\enlargethispage{-5in}}

% references section

% can use a bibliography generated by BibTeX as a .bbl file
% BibTeX documentation can be easily obtained at:
% http://mirror.ctan.org/biblio/bibtex/contrib/doc/
% The IEEEtran BibTeX style support page is at:
% http://www.michaelshell.org/tex/ieeetran/bibtex/
\bibliographystyle{ACM-Reference-Format}
% argument is your BibTeX string definitions and bibliography database(s)
%\bibliography{IEEEabrv,../bib/paper}
%
% <OR> manually copy in the resultant .bbl file
% set second argument of \begin to the number of references
% (used to reserve space for the reference number labels box)

\bibliography{finalCorrection}

\section*{Appendix A:  Testing Harness}

\begin{lstlisting}[breaklines=true,language=C,lineskip=-1pt,mathescape=true]
#include <stdio.h>
#include <stdlib.h>
#include <math.h>
#include <stdbool.h>
#include <string.h>
#include <assert.h>

// Forward declarations
unsigned long long recip_final_correction(unsigned long long x,
                                          unsigned long long approxRecip);
void batch_mode();
unsigned long long individual_mode(unsigned long long x, unsigned long long ulp_err);

float fixed23_to_float_given_exp(unsigned long long int x, int unbiased_exp);
unsigned long long float_to_fixed23_given_exp(float x, int unbiased_exp);

// This is the IEEE single precision including the implied one bit
const unsigned int n = 24;

// n - 1 = 23 fraction bits
const unsigned long long ONE_24_23 = (1ull << (n - 1));
const unsigned long long ONE_48_47 = 1ull << (n + (n - 1));
const unsigned long long TWO = (ONE_24_23 << 1);

int main(int argc, const char *argv[]) {
  // Get mode to run: Batch or individual mode
  if (argc == 2 && strcmp(argv[1], "-b") == 0) {
    batch_mode();
  }
  else if (argc == 4 && strcmp(argv[1], "-i") == 0) {
    unsigned long long x;
    unsigned long long ulp_err;
    sscanf(argv[2], "%llu", &x);
    sscanf(argv[3], "%llu", &ulp_err);
    unsigned long long output = individual_mode(x, ulp_err);
    printf("%lf\n", (double)fixed23_to_float_given_exp(output, -1));
  } else {
    unsigned long long x = 0xaaaaaa;
    // 0xffffff; // float_to_fixed23_given_exp(1.999, 0);
    // printf("x = 0x%016llx\n", x);
    for (int ulp_err = -1; ulp_err >= -7; ulp_err--) {
      printf("\nulp_err = %d\n", ulp_err);
      individual_mode(x, ulp_err);
    }
  }
}

void batch_mode(int ulp_err) {
  // to test other versions, change the bound
  for (int ulp_err = 0; ulp_err <= 7; ulp_err++) {
    printf("Testing for ulp_err = %d\n", ulp_err);
    for (unsigned long long X_24_23 = ONE_24_23 + 1; X_24_23 < TWO; X_24_23++) {
      float x_float = fixed23_to_float_given_exp(X_24_23, 0);
      assert(1.0 <= x_float && x_float < 2.0);

      unsigned long long true_recip =
        float_to_fixed23_given_exp(1.0f/x_float, -1);
      assert(0.5f <= 1.0f/x_float && 1.0f/x_float < 1.0);

      unsigned long long approx_recip = true_recip - ulp_err;
      unsigned long long corrected_recip;
      unsigned long long residual =
        ONE_48_47 - X_24_23 * approx_recip;

      corrected_recip =
        // to test other versions, change this call
        recip_final_correction(X_24_23, approx_recip);

      if (true_recip != corrected_recip) {
        printf("x_float = %f\nx_24bits_Em23 = 0x%016llx\ntrue_recip = 0x%016llx\ncorrected_recip = 0x%016llx\n",
               x_float, X_24_23, true_recip, corrected_recip);
      }

      // assert(true_recip == corrected_recip);
    }
  }
  // printf("correct_results = %d\nincorrect_results = %d\n",
  //      correct_results, incorrect_results);
}

unsigned long long individual_mode(unsigned long long x, unsigned long long ulp_err) {
  // Input precondition: 1 <= x < 2, in fixed point
  assert(ONE_24_23 <= x && x < TWO);

  // Ensure precondition is preserved when converted to float
  float x_float = fixed23_to_float_given_exp(x, 0);
  assert(1.0 <= x_float && x_float < 2.0);

  // Ensure 0.5 <= x < 1 of true_recip when converted back to fixed point
  unsigned long long true_recip = float_to_fixed23_given_exp(1.0f/x_float, -1);
  assert(0.5 <= fixed23_to_float_given_exp(true_recip, -1)
         && fixed23_to_float_given_exp(true_recip, -1) < 1.0);

  unsigned long long corrected_val = recip_final_correction(x, true_recip + ulp_err);

  // Ensure correction is equal to true value
  if (corrected_val != true_recip) {
    printf("Correction failed: \n");
    printf("  corrected_val:      0x%016llx\n", corrected_val);
    printf("  true_recip:         0x%016llx\n", true_recip);
  }
  assert(corrected_val == true_recip);

  return corrected_val;
}

unsigned long long float_to_fixed23_given_exp(float x, int unbiased_exp) {
  typedef union {float f; int i;} float_union_t;
  float_union_t ux;
  ux.f = x;
  int exp = (ux.i & 0x7F800000) >> 23;
  exp -= 127;
  assert(exp == unbiased_exp);
  unsigned long long significand = (1ull << 23) /*implybit*/ ;
  significand += ux.i & 0x007FFFFF;
  return significand;
}

float fixed23_to_float_given_exp (unsigned long long x, int unbiased_exp) {
  float result;
  result = ((float)x)/pow(2.0,23 - unbiased_exp);
  return result;
}

\end{lstlisting}

\section*{Appendix B: Listings for Different Multiplication Widths\label{sect:appendixB}}

4-by-3 multiplication, no rounding, correction up to and including 3 ulps of error:
\begin{lstlisting}[breaklines=true,language=C,lineskip=-1pt,mathescape=true]
unsigned long long recip_final_correction(
  unsigned long long X_24_23, // binary point at position 23 // interval [1, 2)
  unsigned long long Y_24_24 // interval (0.5, 1]
                                          ) {
  // x == 1 <=> 1/x == 1
  if (X_24_23 == ONE_24_23) {
    return ONE_24_23;
  }

  // HALF_24_24 has shifted place value; it shares the same bit pattern as ONE_24_23
  unsigned long long HALF_24_24 = ONE_24_23;
  // approximation is left of the interval [0.5, 1)
  if (Y_24_24 < HALF_24_24) {
    Y_24_24 = HALF_24_24;
  }

  // { R_48_47 : Int | R_48_47 is positive }
  assert(ONE_48_47 >= X_24_23 * Y_24_24);
  unsigned long long R_48_47 = ONE_48_47 - X_24_23 * Y_24_24;
  assert(R_48_47 < (1ull << 27) /* 8ulps */);
  unsigned long long R_26_47 = R_48_47;

  // We know R_48_47/X_24_23 < 8ulps
  // So we can refine R_48_47
  // 

  // Different multiplications can correct different levels of errors, e.g.,
  // 3bit-by-3bit mult, no rounding = up to and including 3 ulps
  //        C = ((R_48_47 >> 21) * (Y_24_24 >> 21)) >> 5;

  unsigned long long R_27_26 = R_48_47 >> 21;
  unsigned long long Y_3_3   = Y_24_24 >> 21;
  assert(R_27_26 < 16);
  assert(Y_3_3 < 8);
  unsigned long long C_25_24 =
    (R_27_26 * Y_3_3) >> 5;
  assert(C_25_24 <= 3);
  unsigned long long C_2_24 = C_25_24;

  unsigned long long B_27_47 = ((2 * C_2_24 + 1) * X_24_23);

  // R_26_47 * 2 = R_27_47
  if ((R_26_47 * 2) < B_27_47) {
    return Y_24_24 + C_2_24;
  } else {
    return Y_24_24 + (C_2_24 + 1);
  }
}
\end{lstlisting}

\noindent
5-by-3 multiplication, with rounding $(+ (1 << 4))$, correction up to and including 6 ulps of error:
\begin{lstlisting}[breaklines=true,language=C,lineskip=-1pt,mathescape=true]
unsigned long long recip_final_correction(
  unsigned long long X_24_23, // binary point at position 23 // interval [1, 2)
  unsigned long long Y_24_24 // interval (0.5, 1]
                                          ) {
  // x == 1 <=> 1/x == 1
  if (X_24_23 == ONE_24_23) {
    return ONE_24_23;
  }

  // HALF_24_24 has shifted place value; it shares the same bit pattern as ONE_24_23
  unsigned long long HALF_24_24 = ONE_24_23;
  // approximation is left of the interval [0.5, 1)
  if (Y_24_24 < HALF_24_24) {
    Y_24_24 = HALF_24_24;
  }

  // { R_48_47 : Int | R_48_47 is positive }
  assert(ONE_48_47 >= X_24_23 * Y_24_24);
  unsigned long long R_48_47 = ONE_48_47 - X_24_23 * Y_24_24;
  assert(R_48_47 < (1ull << 27) /* 8ulps */);
  unsigned long long R_27_47 = R_48_47;

  // We know R_48_47/X_24_23 < 8ulps
  // So we can refine R_48_47
  // 

  // Different multiplications can correct different levels of errors, e.g.,
  // 3bit-by-3bit mult, no rounding = up to and including 3 ulps
  //        C = ((R_48_47 >> 21) * (Y_24_24 >> 21)) >> 5;
  // 3bit-by-3bit mult, with rounding (+ (1 << 4)) = up to and including 5 ulps
  //        C = ((R_48_47 >> 21) * (Y_24_24 >> 21) + (1 << 4)) >> 5;
  // 4bit-by-4bit mult, with rounding (+ (1 << 6))
  //   = up to and including 7 ulps
  // 4-by-3 mult, no rounding, correct up to 3ulps (Only change is from 3 to 4 bit for the left operand). Can assert R_48_47 less than 4 ulps instead of 8 ulps. Can also assert C_25_24 < 4, giving C_2_24
  // 5-by-3 mult, with rounding, correct up to 6ulps (Left operand from 3 to 5 bits, ulp correct from 5 to 6)

  unsigned long long R_27_26 = R_48_47 >> 21;
  unsigned long long Y_3_3   = Y_24_24 >> 21;
  assert(R_27_26 < 32);
  assert(Y_3_3 < 8);
  unsigned long long C_25_24 =
    (R_27_26 * Y_3_3 + (1 << 4)) >> 5;
  assert(C_25_24 < 8);
  unsigned long long C_3_24 = C_25_24;


  unsigned long long B_28_47 = ((2 * C_3_24 + 1) * X_24_23);

  // R_27_47 * 2 = R_28_47
  if ((R_27_47 * 2) < B_28_47) {
    return Y_24_24 + C_3_24;
  } else {
    return Y_24_24 + (C_3_24 + 1);
  }
}
\end{lstlisting}

\end{document}